\DeclareFontFamily{OT1}{pzc}{}
\DeclareFontShape{OT1}{pzc}{m}{it}{<-> s * [1.150] pzcmi7t}{}
\DeclareMathAlphabet{\mathpzc}{OT1}{pzc}{m}{it}
\DeclareMathOperator{\tr}{Tr}
\theoremstyle{plain}
\newtheorem{theorem}{Theorem}
\newtheorem{lemma}[theorem]{Lemma}
\theoremstyle{definition}
\crefname{equation}{Eq.}{eqns}
\Crefname{equation}{Eq.}{Eqns}
\begin{document}


\title{Designing Stochastic Channels}


\author{Matthew A.\ Graydon}
\email[]{m3graydo@uwaterloo.ca}
\affiliation{Institute for Quantum Computing, University of Waterloo, Waterloo, Ontario N2L 3G1, Canada}
\affiliation{Department of Physics \& Astronomy, University of Waterloo, Waterloo, Ontario N2L 3G1, Canada}

\author{Joshua Skanes-Norman}
\affiliation{Institute for Quantum Computing, University of Waterloo, Waterloo, Ontario N2L 3G1, Canada}
\affiliation{Department of Applied Mathematics, University of Waterloo, Waterloo, Ontario N2L 3G1, Canada}
\affiliation{Keysight Technologies Canada, Kanata, ON K2K 2W5, Canada}

\author{Joel J.\ Wallman}
\affiliation{Institute for Quantum Computing, University of Waterloo, Waterloo, Ontario N2L 3G1, Canada}
\affiliation{Department of Applied Mathematics, University of Waterloo, Waterloo, Ontario N2L 3G1, Canada}
\affiliation{Keysight Technologies Canada, Kanata, ON K2K 2W5, Canada}

\date{\today}

\begin{abstract} 
Stochastic channels are ubiquitous in the field of quantum information because they are simple and easy to analyze. 
In particular, Pauli channels and depolarizing channels are widely studied because they can be efficiently simulated in many relevant quantum circuits.
Despite their wide use, the properties of general stochastic channels have received little attention.
In this paper, we prove that the diamond distance of a general stochastic channel from the identity coincides with its process infidelity to the identity.
We demonstrate with an explicit example that there exist multi-qubit stochastic channels that are not unital.
We then discuss the relationship between unitary 1-designs and stochastic channels. 
We prove that the twirl of an arbitrary quantum channel by a unitary 1-design is always a stochastic channel. 
However, unlike with unitary 2-designs, the twirled channel depends upon the choice of unitary 1-design.
Moreover, we prove by example that there exist stochastic channels that cannot be obtained by twirling a quantum channel by a unitary 1-design.
\end{abstract}

\maketitle


\section{Introduction}

Quantum computers hold tremendous promise, but noise severely limits the reliability of quantum devices in the noisy intermediate-scale quantum (NISQ) era \cite{Preskill2018}. Quantum error correction and noise mitigation methods hinge on assumptions about the noise in question \cite{cao2021,Carignan2019}. A standard assumption is that the noise is a Pauli channel, because such channels are simple and easy to analyze. In this paper, we consider a more general class of stochastic error processes. A stochastic channel is a completely positive and trace-preserving (CPTP) map admitting a set of Hilbert-Schmidt orthogonal Kraus operators containing a strictly positive multiple of the identity \cite{Wallman2015bounding}. 
Operationally, a stochastic channel corresponds to an error occurring with some probability $p\in[0,1)$. Some nice examples are depolarizing channels, decoherence channels, multi-qubit Pauli channels and qudit Weyl-Heisenberg channels \cite{Watrous2018}. In this paper, we identify significant operational properties common to all stochastic channels. 

The diamond distance to the identity \cite{Kitaev1997} (hereafter, the diamond distance) is a metric for quantifying error rates in quantum circuits. The diamond distance can be expressed as a semidefinite program \cite{Watrous2009}, but no efficient protocols have been found to measure it directly for arbitrary channels. 
However, we prove that the diamond distance of a stochastic channel is its process infidelity to the identity (\cref{thm:diamond}); moreover, these measures are simply the probability that an error occurs. The process infidelity to the identity (hereafter, the process infidelity) can be efficiently estimated via randomized benchmarking and generalizations thereof \cite{Emerson2005, Wallman2018, Helsen2020}. Stochastic channels are thus a class of channels wherein the difficult problem of calculating the diamond distance becomes quite easy.

Stochastic channels can be created with low overhead. Twirling over unitary $t$-designs \cite{DiVincenzo2002,Dankert2005,Dankert2009,Cleve2015,Zhu2016,Webb2016,Zhu2017,Haferkamp2020,Skanes2021} is an industry standard technique in quantum information science. A unitary $t$-design is a normalized measure on the unitary group mimicking its $m{^\text{th}}$ Haar moments for all $m\leq t$. The twirl of a quantum channel over a unitary 2-design is a depolarizing channel with the same process infidelity. In this paper, we exhibit a similar mechanism with $t=1$. We point out that channel twirling commutes with the Choi-Jamio{\l}kowski isomorphism (\cref{lem:twirl}). We prove that the twirl of any quantum channel over any unitary $1$-design is a stochastic channel with $p$ the process fidelity of the input channel (\cref{thm: twirl}). We emphasize that noise can be tailored as such without resorting to some potentially higher unitary $t$-designs, \textit{e.g}.\ Clifford groups \cite{Graydon2021}. The twirl of a quantum channel over a unitary 1-design, however, depends on the chosen unitary 1-design. Many unitary 1-designs exist in all finite dimensions \cite{Roy2009}. 

A natural question is whether or not the class of stochastic channels extends beyond the class of mixed unitary channels implemented by unitary operator bases containing the identity. Indeed, the twirl of a quantum channel over a unitary 1-design is unital (\cref{thm: unitalTwirl}). We nevertheless establish a positive answer. We define a noisy rank-2 state preparation channel for a system of two qubits that is stochastic but not unital. We also prove that all stochastic qubit channels are unital (\cref{thm: qubits}).

The balance of this paper is structured as follows. In \cref{sec:preliminaries}, we recall preliminary notions from quantum information theory in order to establish three equivalent definitions of stochastic channels. We then define via \cref{examplePhi} a stochastic channel that is not unital. In \cref{sec:diamonds}, we prove that the diamond distance and process infidelity are equal for arbitrary stochastic channels. In \cref{sec:twirling}, we discuss connections between stochastic channels and unitary 1-designs and we prove \cref{lem:twirl}, \cref{thm: twirl}, and \cref{thm: unitalTwirl}. Finally, in \cref{sec: conclusion} we conclude and discuss directions for future research.

\section{Stochastic Channels}\label{sec:preliminaries}

We begin by defining stochastic channels, which are informally a noiseless process probabilistically mixed with orthogonal errors.
A formal definition of a stochastic channel (in particular, the notion of orthogonal channels) is stated most naturally in terms of the Choi state, which we now review.
Let $\mathpzc{H}$ be a finite-dimensional complex Hilbert space and $\mathfrak{B}(\mathpzc{H})$ be the set of bounded complex-linear maps from $\mathpzc{H}$ to itself equipped with the Hilbert-Schmidt inner product,
\begin{align}
    \langle\!\langle A | B \rangle\!\rangle \equiv \tr A^\dagger B \quad \forall A, B \in \mathfrak{B}(\mathpzc{H}).
\end{align}
Then $\mathfrak{B}(\mathpzc{H})$ is itself a finite-dimensional complex Hilbert space and so $\mathfrak{B}(\mathfrak{B}(\mathpzc{H}))$ is well-defined.
Canonically, $\mathpzc{H}$ is the set of complex vectors, that is, $\mathpzc{H} = \mathbb{C}^d$ for some finite positive integer $d$, and $\mathfrak{B}(\mathpzc{H})$ are $d \times d$ complex matrices.
We can represent the elements of $\mathfrak{B}(\mathbb{C}^d)$ as vectors using the column-stacking vectorization map $\text{col}$, and then elements $\Phi$ of $\mathfrak{B}(\mathfrak{B}(\mathbb{C}^d))$ by their \textit{Choi-Jamio{\l}kowski state} (hereafter their Choi state),
\begin{equation}
\mathpzc{J}(\Phi)\equiv \frac{1}{d}\Phi\otimes\text{id}_{d}\left(\text{col}(\mathds{1}_{d})\text{col}(\mathds{1}_{d})^{\dagger}\right),
\label{CJstate}
\end{equation} 
where $\mathds{1}_d \in \mathfrak{B}(\mathbb{C}^d)$ is the identity operator and $\text{id}_d$ is the identity element of $\mathfrak{B}(\mathfrak{B}(\mathbb{C}^d))$.
Applying a singular value decomposition to $\mathpzc{J}(\Phi)$ and using the fact that the vectorization map is invertible, we find that there exist suitable vectors $l_{k},r_{k}\in\mathbb{C}^{d^{2}}$ and matrices $L_{k},R_{k}\in\mathfrak{B}(\mathbb{C}^{d})$ so that
\begin{align}\label{specJ}
    \mathpzc{J}(\Phi) = \frac{1}{d}\sum_{k = 1}^{d^2} l_k r_k^\dagger \equiv \frac{1}{d}\sum_{k = 1}^{d^2} \text{col}(L_k^\text{T})\text{col}(R_k^\text{T})^\dagger\text{.}
\end{align}
Using the vectorization identity
\begin{equation}
\text{col}(ABC)=(C^{\text{T}}\!\otimes\!A)\text{col}(B)  \quad \forall A,B,C\in\mathfrak{B}(\mathbb{C}^d),
\label{vecId}
\end{equation}
we thus see that we can write $\Phi$ in a (non-unique) Kraus form as
\begin{align}\label{eq:KrausHellwig}
    \Phi(X) = \sum_{k = 1}^{d^2} L_k X R_k^\dagger.
\end{align}
Using the vectorization identity
\begin{equation}
\tr A^\dagger B = \text{col}(A)^\dagger \text{col}(B),
\label{vecTraceID}
\end{equation}
the orthogonality of the vectors in the singular value decomposition implies that for all $j \neq k$, we have
\begin{align}
    \tr L_j^\dagger L_k = \tr R_j^\dagger R_k = 0.
\end{align}
For any CPTP map, the Choi state is a positive matrix, so that $L_k = R_k$ for all $k$ and we can write
\begin{align}
    \Phi = \sum_{k=1}^{d^2} \text{Ad}_{L_k}
\end{align}
where for any matrix $A\in\mathfrak{B}(\mathbb{C}^{d})$ we define
\begin{equation}\label{AdDef}
\text{Ad}_{A}:\mathfrak{B}(\mathbb{C}^{d})\longrightarrow \mathfrak{B}(\mathbb{C}^{d})::x\longmapsto AxA^{\dagger}\text{.}
\end{equation}
Using the cyclic invariance of the trace, the trace preservation condition on a CPTP map is
\begin{equation}\label{eq:TP}
\sum_{k=1}^{d^2} L_k^\dagger L_k =\mathds{1}_{d}\text{.}
\end{equation}
A \textit{stochastic channel} is a CPTP map $\Phi\in\mathfrak{B}(\mathfrak{B}(\mathbb{C}^{d}))$ such that for some $0<\lambda\leq 1$
\begin{equation}
\mathpzc{J}(\Phi)\text{col}(\mathds{1}_{d})=\lambda\text{col}(\mathds{1}_{d})\text{.}
\end{equation}
Put otherwise, a stochastic channel is a CPTP map whose set of canonical Kraus operators contains a strictly positive multiple of the identity. A stochastic channel is therefore a map 
\begin{equation}\label{eq:StochasticDefinition}
\Phi:\mathfrak{B}(\mathbb{C}^{d})\longrightarrow\mathfrak{B}(\mathbb{C}^{d})::x\longmapsto \lambda x+(1-\lambda)\Phi^{\perp}(x)\text{,}
\end{equation}
where $\Phi^{\perp}\in\mathfrak{B}(\mathfrak{B}(\mathbb{C}^{d}))$ is an error channel that is orthogonal to the identity, that is
\begin{equation}
\mathpzc{J}(\Phi^{\perp})\text{col}(\mathds{1}_{d})=0\text{.}
\end{equation}
The set of stochastic channels is manifestly convex via \cref{eq:StochasticDefinition}.
The class of stochastic channels contains channels of known types. A mixed unitary channel is a quantum channel such that its Kraus decomposition may be taken in terms of a set of appropriately subnormalized unitary matrices. A mixed unitary channel is therefore always unital, which is to say that it leaves the maximally mixed state invariant. Indeed, some quantum channels might not appear to be mixed unitary at first sight. It turns out that question is actually NP-Hard \cite{Lee2020}. We proceed now by way of an example that our definition of stochastic channels encompasses channels outside the mixed unitary class. Define the following $4\times 4$ matrices with $\lambda\in(0,1]$
\begin{align}
A_{1}&\equiv\sqrt{\lambda}\mathds{1}_{2}\otimes\mathds{1}_{2}\text{,}\label{Ae1}\\
A_{2}&\equiv\sqrt{1-\lambda}\sigma_{z}\otimes|0\rangle\!\langle 0|\text{,}\label{Ae2}\\
A_{3}&\equiv\sqrt{1-\lambda}\sigma_{z}\otimes|0\rangle\!\langle 1|\text{.}\label{Ae3}
\end{align}
where $\sigma_{z}$ is the usual Pauli Z matrix. Defining the quantum channel
\begin{align}
\Phi = \sum_{k=1}^3 \text{Ad}_{A_k},
\label{examplePhi}
\end{align}
one can easily verify that $\Phi$ is a stochastic quantum channel. One can check, furthermore, that $\Phi(\mathds{1}_{4})=\lambda\mathds{1}_{4}+(1-\lambda)\mathds{1}_{2}\otimes2|0\rangle\langle0|\neq\mathds{1}_{4}$. $\Phi$ prepares a rank-2 state with probability $(1-\lambda)$, and with probability $\lambda$ the stochastic channel $\Phi$ is noiseless. 
This construction generalizes to all even dimensions larger than 2.
In \cref{app: qubits} we prove that all stochastic qubit channels are unital. 
We leave open the question of whether or not all stochastic channels are unital for $d = 3$.

\section{Faithful Diamonds}\label{sec:diamonds}
In this section, we prove that stochastic channels are such that their diamond distance equals their process infidelity. We recall that the \textit{trace norm} is $\|\cdot\|_{1}:\mathfrak{B}(\mathbb{C}^{d})\longrightarrow\mathbb{R}_{\geq 0}::X\longmapsto\mathrm{Tr}\sqrt{X^{\dagger}X}$. The \textit{induced trace norm} on superoperators is defined via
\begin{eqnarray}
\|\cdot\|_{1}&:&\mathfrak{B}(\mathfrak{B}(\mathbb{C}^{d}))\longrightarrow\mathbb{R}_{\geq 0}\nonumber\\
&::&\Gamma\longmapsto \text{max}\left\{\|\Phi(X)\|_{1}\;\big|\;\|X\|_{1}\leq 1\right\}\text{.}
\end{eqnarray}
The completely bounded trace norm \cite{Watrous2018} is also commonly referred to as the \textit{diamond norm} \cite{Kitaev1997} and defined as follows
\begin{eqnarray}
\|\cdot\|_{\diamond}&:&\mathfrak{B}\big(\mathfrak{B}(\mathbb{C}^{d})\big)\longrightarrow\mathbb{R}_{\geq 0}\nonumber\\
&::&\Gamma\longmapsto \|\Gamma\otimes\text{id}_{d}\|_{1}\text{.}
\label{dDef}
\end{eqnarray}
The diamond norm induces for any quantum channel $\Phi$ its \textit{diamond distance from the identity}
\begin{equation}
r_{\diamond}(\Phi)\equiv\frac{1}{2}\|\Phi-\text{id}_{d}\|_{\diamond}\text{.}
\end{equation}
The \textit{process fidelity to the identity} of $\Phi$ is
\begin{align}\label{ProcessFidelity}
F_e(\Phi) &= \tr \left(\mathpzc{J}(\text{id}_d)^\dagger \mathpzc{J}(\Phi)\right) \\
&= \tr \left(\text{col}(\mathds{1}_d) \text{col}(\mathds{1}_d)^\dagger \mathpzc{J}(\Phi)\right) \\
&= \text{col}(\mathds{1}_d)^\dagger \mathpzc{J}(\Phi) \text{col}(\mathds{1}_d). 
\end{align}
The \textit{process infidelity} is \cite{Fuchs1999} 
\begin{equation}
r(\Phi)\equiv 1-F_{e}(\Phi)\text{.}
\end{equation}

The following theorem generalizes a similar result for generalized Pauli channels \cite{Magesan2012}. Note that, in our more general case of interest, the canonical Kraus operators for an arbitrary stochastic channel need not be unitary.
\begin{theorem}\label{thm:diamond}\textit{Let} $\Phi\in\mathfrak{B}(\mathfrak{B}(\mathbb{C}^{d}))$ \textit{be a stochastic quantum channel. Then}
\begin{equation}
r_{\diamond}(\Phi)=r(\Phi)\text{.}
\end{equation}
\end{theorem}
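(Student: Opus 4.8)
The plan is to show that both $r_\diamond(\Phi)$ and $r(\Phi)$ equal $1-\lambda$, where $\lambda\in(0,1]$ is the parameter in the stochastic decomposition $\Phi = \lambda\,\text{id}_d + (1-\lambda)\Phi^\perp$ of \cref{eq:StochasticDefinition}, with $\Phi^\perp$ a quantum channel obeying $\mathpzc{J}(\Phi^\perp)\text{col}(\mathds{1}_d) = 0$. The process-infidelity side is immediate: $\mathpzc{J}$, and hence $F_e$, is linear, $F_e(\text{id}_d)=1$, and $F_e(\Phi^\perp)=0$ by the orthogonality condition, so $F_e(\Phi)=\lambda$ and $r(\Phi)=1-\lambda$; equivalently this can be read straight off $\mathpzc{J}(\Phi)\text{col}(\mathds{1}_d)=\lambda\,\text{col}(\mathds{1}_d)$.

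For the diamond distance I would establish matching bounds. The upper bound uses only positive homogeneity and subadditivity of the diamond norm together with the standard fact that a quantum channel has unit diamond norm: since $\Phi-\text{id}_d = (1-\lambda)(\Phi^\perp-\text{id}_d)$, we get $\|\Phi-\text{id}_d\|_\diamond = (1-\lambda)\|\Phi^\perp-\text{id}_d\|_\diamond \le (1-\lambda)\big(\|\Phi^\perp\|_\diamond+\|\text{id}_d\|_\diamond\big) = 2(1-\lambda)$, whence $r_\diamond(\Phi)\le 1-\lambda$. For the lower bound I would feed into $(\Phi-\text{id}_d)\otimes\text{id}_d$ the maximally entangled state $\mathpzc{J}(\text{id}_d)$, which has unit trace norm and is therefore admissible in the maximization defining the induced trace norm. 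By \cref{CJstate} the output is $\mathpzc{J}(\Phi)-\mathpzc{J}(\text{id}_d) = (1-\lambda)\big(\mathpzc{J}(\Phi^\perp)-\mathpzc{J}(\text{id}_d)\big)$. Since $\mathpzc{J}(\Phi^\perp)\ge 0$ and $\mathpzc{J}(\Phi^\perp)\text{col}(\mathds{1}_d)=0$, the support of $\mathpzc{J}(\Phi^\perp)$ is orthogonal to the range of $\mathpzc{J}(\text{id}_d)\propto\text{col}(\mathds{1}_d)\text{col}(\mathds{1}_d)^\dagger$; two density operators with orthogonal supports are at trace distance one, so $\|\mathpzc{J}(\Phi^\perp)-\mathpzc{J}(\text{id}_d)\|_1 = 2$ and hence $\|\Phi-\text{id}_d\|_\diamond \ge 2(1-\lambda)$, i.e.\ $r_\diamond(\Phi)\ge 1-\lambda$. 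Combining the two bounds gives $r_\diamond(\Phi)=1-\lambda=r(\Phi)$.

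The calculations are elementary, and I expect the one step carrying the real content to be the lower bound: one must recognize the maximally entangled state as an optimal distinguishing input and then invoke positive semidefiniteness of the Choi matrix of $\Phi^\perp$ to promote ``$\mathpzc{J}(\Phi^\perp)$ annihilates $\text{col}(\mathds{1}_d)$'' to ``$\mathpzc{J}(\Phi^\perp)$ and $\mathpzc{J}(\text{id}_d)$ have orthogonal supports,'' which is precisely what makes the trace distance saturate at its maximal value. Everything else — linearity of $F_e$, homogeneity and subadditivity of $\|\cdot\|_\diamond$, and $\|\Psi\|_\diamond=1$ for channels $\Psi$ — is routine.
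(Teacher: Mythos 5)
Your proof is correct, and it differs from the paper's in a worthwhile way. For the upper bound you and the paper do essentially the same thing: the paper expands $\Phi-\text{id}_d$ in the canonical Kraus operators $\{B_k\}$ with $B_1=\sqrt{\lambda}\mathds{1}_d$ and bounds $\|\sum_{k\geq 2}\text{Ad}_{B_k}\|_\diamond\leq 1-\lambda$ using positivity of the outputs and the trace-preservation condition \cref{eq:TP}, whereas you work directly with the convex decomposition \cref{eq:StochasticDefinition} and use $\|\Phi^\perp\|_\diamond=1$; your version is a touch cleaner, though it quietly relies on $\Phi^\perp$ being a genuine channel (CP because $\mathpzc{J}(\Phi)\geq 0$ has $\text{col}(\mathds{1}_d)$ as an eigenvector, so the orthogonal remainder is still positive semidefinite; TP because $\Phi$ and $\text{id}_d$ are), which is consistent with the paper's setup but worth stating. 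The real divergence is the lower bound: the paper simply imports the general inequality $r_\diamond(\Phi)\geq r(\Phi)$ valid for all CPTP maps (their Ref.~[Wallman--Flammia]), while you prove it from scratch for stochastic channels by feeding the maximally entangled state $\mathpzc{J}(\text{id}_d)$ into $(\Phi-\text{id}_d)\otimes\text{id}_d$ and observing that positivity of $\mathpzc{J}(\Phi^\perp)$ upgrades $\mathpzc{J}(\Phi^\perp)\text{col}(\mathds{1}_d)=0$ to orthogonality of supports, so the trace distance between $\mathpzc{J}(\Phi^\perp)$ and $\mathpzc{J}(\text{id}_d)$ is maximal. The paper's route is shorter by citation; yours is self-contained and carries extra information, namely that for stochastic channels the maximally entangled input already saturates the diamond norm, which is exactly why the generally loose bound $r_\diamond\geq r$ becomes an equality here. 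The minor degenerate case $\lambda=1$ (where $\Phi=\text{id}_d$ and there is no $\Phi^\perp$) is trivial and does not affect the argument.
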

\begin{proof} 
For any CPTP map, we have the bound~\cite{Wallman2014}
\begin{align}
    r_\diamond(\Phi) \geq r(\Phi).
\end{align}
Therefore we need only prove that for a stochastic map $\Phi$, 
\begin{align}
    r_\diamond(\Phi) \leq r(\Phi).
\end{align}
To prove this, recall that by the definition of a stochastic channel we have 
\begin{align}
    \Phi = \sum_k \text{Ad}_{B_k}
\end{align}
with $B_1 = \sqrt{\lambda} \mathds{1}_d$ for some $\lambda \in (0, 1)$.
By the triangle inequality, we therefore have
\begin{align}
    \|\Phi-\text{id}_{d}\|_\diamond 
    &\leq (1 - \lambda) \|\text{id}_{d}\|_\diamond + \|\sum_{k = 2}^{d^2}  \text{Ad}_{B_k}\|_\diamond \notag\\
    &\leq 1 - \lambda + \|\sum_{k = 2}^{d^2}  \text{Ad}_{B_k}\|_\diamond
\end{align}
where we have used the fact that $\| \text{id}_{d} \|_\diamond = 1$.
Now from the definition of the diamond norm and the triangle inequality, we have
\begin{align}\label{eq:orthNorm}
    \| \sum_{k = 2}^{d^2}  \text{Ad}_{B_k} \|_\diamond 
    &= \sup_{\rho} \| \sum_{k = 2}^{d^2} \text{Ad}_{B_k\otimes \mathds{1}_d}(\rho) \|_1 \notag\\
    &= \sup_{\rho} \sum_{k = 2}^{d^2} \| \text{Ad}_{B_k\otimes \mathds{1}_d}(\rho) \|_1.
\end{align}
For any operator $B$ and any positive operator $\rho$, $B \rho B^\dagger$ is positive and so $\|\text{Ad}_B(\rho) \|_1 = \tr B \rho B^\dagger = \tr B^\dagger B \rho$ by the cyclic invariance of the trace.
Therefore using \cref{eq:TP}, we find that for any CPTP map $\Phi$,
\begin{align}\label{eq:orthNorm2}
    \| \sum_{k = 2}^{d^2}  \text{Ad}_{B_k} \|_\diamond 
    &\leq \sup_{\rho} \sum_{k = 2}^{d^2} \tr (B_k^\dagger B_k \otimes \mathds{1}_d) \rho \notag\\
    &= \sup_{\rho} \tr \big((\mathds{1}_{d} - B_1^\dagger B_1) \otimes \mathds{1}_{d}\big) \rho.
\end{align}
For a stochastic channel, $B_1 = \sqrt{\lambda} \mathds{1}_d$ and so
\begin{align}\label{eq:orthNorm3}
    \| \sum_{k = 2}^{d^2}  \text{Ad}_{B_k} \|_\diamond 
    &\leq 1 - \lambda.
\end{align}
We thus have
\begin{align}
r_\diamond(\Phi) = 1 - \lambda.
\end{align}
\end{proof}

\section{Twirling}\label{sec:twirling}

We now show that unitary 1-designs can be used to construct stochastic channels.
We also show that there exist stochastic channels that cannot be constructed using a unitary 1-design.
Let $\mathsf{U}(\mathpzc{H})$ be the unitary group on a Hilbert space $\mathpzc{H}$. 
Let $\mu:\Sigma(\mathsf{U}(\mathbb{C}^d))\longrightarrow\mathbb{R}_{\geq 0}$ be a normalized measure on the unitary group $\mathsf{U}(\mathbb{C}^d)$, where $\Sigma(\mathsf{U}(\mathbb{C}^d))$ denotes the Borel $\sigma$-algebra of $\mathsf{U}(\mathbb{C}^d)$ \cite{halmos2013measure}. 
Then the $\mu$-\textit{twirl} of a map $\Phi\in\mathfrak{B}(\mathfrak{B}(\mathbb{C}^{d}))$ is the map
\begin{equation}
\Phi_{\mu}=\int_{\mathsf{U}(\mathbb{C}^d)}d\mu(u)\mathrm{Ad}_{u}\circ\Phi\circ\mathrm{Ad}_{u^{\dagger}}.
\label{twirlDef}
\end{equation}

\begin{lemma}\label{lem:twirl}
Let $\Phi \in \mathfrak{B}(\mathfrak{B}(\mathbb{C}^d))$. Let $\mu$ be a measure on the unitary group. Then
\begin{equation}
\mathpzc{J}\left(\Phi_{\mu}\right)=\int_{\mathsf{U}(\mathbb{C}^d)}d\mu(u)\mathrm{Ad}_{u \otimes \bar{u}}\left(\mathpzc{J}(\Phi)\right).
\label{twirledJ}
\end{equation} 
\end{lemma}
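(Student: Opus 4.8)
The plan is to reduce the claim to a pointwise identity in $u$ and then integrate. First I would pull the Choi map $\mathpzc{J}$ through the integral in \cref{twirlDef}: since $\mathpzc{J}$ is linear and bounded and everything lives in the finite-dimensional space $\mathfrak{B}(\mathfrak{B}(\mathbb{C}^d))$ with $\mu$ a normalized measure, interchanging $\mathpzc{J}$ with $\int d\mu(u)$ is routine (linearity plus dominated convergence), giving $\mathpzc{J}(\Phi_\mu)=\int_{\mathsf{U}(\mathbb{C}^d)} d\mu(u)\,\mathpzc{J}(\mathrm{Ad}_u\circ\Phi\circ\mathrm{Ad}_{u^\dagger})$. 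It then suffices to prove, for each fixed $u\in\mathsf{U}(\mathbb{C}^d)$, that $\mathpzc{J}(\mathrm{Ad}_u\circ\Phi\circ\mathrm{Ad}_{u^\dagger})=\mathrm{Ad}_{u\otimes\bar u}(\mathpzc{J}(\Phi))$.

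Next I would establish two elementary one-sided composition rules for the Choi map directly from \cref{CJstate}. For postcomposition on the output side, $\mathrm{Ad}_u$ acts on the same tensor factor as the output of the inner map, so $\mathpzc{J}(\mathrm{Ad}_u\circ\Psi)=(\mathrm{Ad}_u\otimes\mathrm{id}_d)(\mathpzc{J}(\Psi))$ is immediate. For precomposition on the input side, I would use the vectorization identity \cref{vecId} with $B=\mathds{1}_d$ to get the ``ricochet'' relation $(A\otimes\mathds{1}_d)\mathrm{col}(\mathds{1}_d)=(\mathds{1}_d\otimes A^{\mathrm{T}})\mathrm{col}(\mathds{1}_d)$; applying it with $A=u^\dagger$, so $A^{\mathrm{T}}=\bar u$, converts the action of $\mathrm{Ad}_{u^\dagger}$ on the input factor of $\mathrm{col}(\mathds{1}_d)\mathrm{col}(\mathds{1}_d)^\dagger$ into the action of $\mathrm{Ad}_{\bar u}$ on the ancillary factor, yielding $\mathpzc{J}(\Phi\circ\mathrm{Ad}_{u^\dagger})=(\mathrm{id}_d\otimes\mathrm{Ad}_{\bar u})(\mathpzc{J}(\Phi))$. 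Composing the two rules, and using that maps acting on different tensor factors commute, gives $\mathpzc{J}(\mathrm{Ad}_u\circ\Phi\circ\mathrm{Ad}_{u^\dagger})=(\mathrm{Ad}_u\otimes\mathrm{Ad}_{\bar u})(\mathpzc{J}(\Phi))$. Finally $(\mathrm{Ad}_u\otimes\mathrm{Ad}_{\bar u})=\mathrm{Ad}_{u\otimes\bar u}$ because $(u\otimes\bar u)(A\otimes B)(u\otimes\bar u)^\dagger=uAu^\dagger\otimes\bar uB\bar u^\dagger$. Substituting this pointwise identity back into the integral completes the argument.

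The only step needing real care is the ricochet relation: one must track the transpose/complex-conjugate bookkeeping forced by the column-stacking convention underlying \cref{CJstate} and \cref{vecId}, to confirm that it is the complex conjugate $\bar u$ (rather than $u$, $u^{\mathrm{T}}$, or $u^\dagger$) that lands on the ancillary factor. Everything else is an unwinding of definitions, so I do not expect a substantive obstacle beyond this bookkeeping.
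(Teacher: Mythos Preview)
Your argument is correct. Both you and the paper reduce to the pointwise identity $\mathpzc{J}(\mathrm{Ad}_u\circ\Phi\circ\mathrm{Ad}_{u^\dagger})=\mathrm{Ad}_{u\otimes\bar u}(\mathpzc{J}(\Phi))$ and then integrate, but the derivations of that identity are organized differently. The paper passes through the generalized Kraus form \cref{eq:KrausHellwig}: it writes $\Phi$ via operators $L_k,R_k$, reads off the new Kraus operators $uL_kv,\,uR_kv$ for $\mathrm{Ad}_u\circ\Phi\circ\mathrm{Ad}_v$, and applies \cref{vecId} to each $\text{col}(v^{\mathrm T}L_k^{\mathrm T}u^{\mathrm T})$ to obtain the two-parameter formula $\mathpzc{J}(\mathrm{Ad}_u\circ\Phi\circ\mathrm{Ad}_v)=(u\otimes v^{\mathrm T})\,\mathpzc{J}(\Phi)\,(v\otimes u^{\mathrm T})$, then specializes $v=u^\dagger$. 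You instead stay at the level of the Choi definition \cref{CJstate} and use the ricochet $(A\otimes\mathds{1}_d)\text{col}(\mathds{1}_d)=(\mathds{1}_d\otimes A^{\mathrm T})\text{col}(\mathds{1}_d)$ on the maximally entangled vector to establish separate pre- and post-composition rules, which is more direct and avoids invoking the SVD-based Kraus decomposition altogether. The trade-off is that the paper's route yields the more general two-unitary identity \cref{eq:compositeUV}, which it reuses in the proof of \cref{thm: qubits}; your route gives only the $v=u^\dagger$ case needed here.
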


\begin{proof}
Let $L_k$ and $R_k$ satisfy \cref{eq:KrausHellwig}.
For fixed elements $u,v \in \mathsf{U}(\mathbb{C}^d)$,
\begin{align}
    \Phi' = \text{Ad}_u \circ \Phi \circ \text{Ad}_{v}
\end{align}
can be written in the form
\begin{align}\label{eq:KrausHellwigPrime}
    \Phi'(X) = \sum_{k = 1}^{d^2} u L_k v X v^\dagger R_k^\dagger u^\dagger.
\end{align} 
By \cref{specJ,vecId}, we thus have
\begin{align}\label{eq:compositeUV}
    \mathpzc{J}(\Phi') 
    &= \frac{1}{d}\sum_{k = 1}^{d^2} \text{col}(v^\text{T} L_k^\text{T} u^\text{T}) \text{col}(\bar{u} R_k^\text{T} \bar{v})^\dagger \notag\\
    &= \frac{1}{d}u \otimes v^{\text{T}} \left( \sum_{k = 1}^{d^2} \text{col}(L_k^\text{T}) \text{col}(R_k^\text{T})^\dagger \right) (v^\dagger \otimes \bar{u})^\dagger \notag\\
    &= (u \otimes v^{\text{T}}) \mathpzc{J}(\Phi') (v \otimes u^{\text{T}}).
\end{align}
Setting $v = u^\dagger$ and averaging over $\mu$ completes the proof.
\end{proof}

There are many equivalent definitions of unitary 1-designs \cite{Dankert2005}\cite{Haferkamp2020}. A \textit{unitary} $1$\textit{-design} (in dimension $d$) is a normalized measure $\mu$ on the unitary group such that
\begin{equation} 
\int_{\mathsf{U}(\mathbb{C}^{d})}d\mu(u)\text{Ad}_u = \int_{\mathsf{U}(\mathbb{C}^{d})}d\mu_{\text{H}}(u) \text{Ad}_u\text{,}
\label{1prop}
\end{equation}
where $\mu_{\text{H}}$ is the normalized Haar measure on $\mathsf{U}(\mathbb{C}^{d})$ \cite{Knapp2016}. We remark that 
\begin{align}
\int_{\mathsf{U}(\mathbb{C}^{d})}d\mu(u)uXu^{\dagger}=\frac{\mathrm{Tr}(X)}{d}\mathds{1}_{d}\text{,}\label{schur}\\
\int_{\mathsf{U}(\mathbb{C}^{d})}d\mu(u) u \otimes \bar{u} =\frac{1}{d}\mathrm{col}(\mathds{1}_{d}) \mathrm{col}(\mathds{1}_{d})^{\dagger}\text{,}
\label{schurNat}
\end{align} 
are then consequences of Schur's lemma \cite{Kirillov1976} where $X \in \mathfrak{B}(\mathbb{C}^d)$. 

The close connection between unitary 1-designs and irreducible unitary representations (in dimension $d$) \cite{Roy2009} means that many unitary 1-designs exist. We especially recommend Bannai \textit{et al}.\ \cite{Bannai2019} for explicit examples. The GAP system \cite{GAP4} could also be harvested. The Pauli group on a single qubit is a unitary 1-design. The multipartite Weyl-Heisenberg group \cite{Appleby2005} is a unitary 1-design for any collection of quantum systems. Another example is always the Clifford group \cite{Graydon2021}. 

We now prove that the twirl of a quantum channel over a unitary 1-design is a stochastic quantum channel. Our proof reveals that the resulting stochastic quantum channel is noiseless with probability the process fidelity of the input channel, which underlies protocols that are designed to estimate the process fidelity~\cite{Erhard2019} or suppress coherent errors~\cite{Wallman2016}.

\begin{theorem}\label{thm: twirl}
\textit{Let} $\Phi\in\mathfrak{B}(\mathfrak{B}(\mathbb{C}^{d}))$ \textit{be quantum channel with nonzero process fidelity.}  \textit{ Let} $\mu$ \textit{be a unitary $1$-design. Then} $\Phi_{\mu}$ \textit{is a stochastic channel.}
\end{theorem}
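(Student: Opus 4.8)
The plan is to use \cref{lem:twirl} to compute the Choi state of $\Phi_\mu$ and then verify directly that $\mathpzc{J}(\Phi_\mu)\mathrm{col}(\mathds{1}_d)$ is a strictly positive multiple of $\mathrm{col}(\mathds{1}_d)$, which is precisely the defining condition of a stochastic channel. First I would note that $\Phi_\mu$ is CPTP: it is an average (integral against a normalized measure) of the CPTP maps $\mathrm{Ad}_u\circ\Phi\circ\mathrm{Ad}_{u^\dagger}$, and the set of CPTP maps is convex and closed. So it only remains to produce the identity-Kraus component.

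Next, by \cref{lem:twirl} we have $\mathpzc{J}(\Phi_\mu)=\int d\mu(u)\,\mathrm{Ad}_{u\otimes\bar u}(\mathpzc{J}(\Phi))$. I would apply this operator to $\mathrm{col}(\mathds{1}_d)$ and use the elementary identity $(u\otimes\bar u)\,\mathrm{col}(\mathds{1}_d)=\mathrm{col}(u\mathds{1}_d u^{\text{T}\,\dagger})=\mathrm{col}(\mathds{1}_d)$ (since $u\bar u^\dagger=u u^{\text{T}\dagger}$ — more carefully, $\mathrm{col}(\mathds{1}_d)$ is the vectorized maximally entangled vector and is invariant under $u\otimes\bar u$ up to the well-known transpose trick). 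Hence
\begin{align}
\mathpzc{J}(\Phi_\mu)\mathrm{col}(\mathds{1}_d)
&=\int d\mu(u)\,(u\otimes\bar u)\,\mathpzc{J}(\Phi)\,(u\otimes\bar u)^\dagger\,\mathrm{col}(\mathds{1}_d)\notag\\
&=\int d\mu(u)\,(u\otimes\bar u)\,\mathpzc{J}(\Phi)\,\mathrm{col}(\mathds{1}_d).
\end{align}
Now I would use \cref{schurNat}: since $\mu$ is a unitary $1$-design, $\int d\mu(u)\,(u\otimes\bar u)(\cdot)=\frac1d\mathrm{col}(\mathds{1}_d)\mathrm{col}(\mathds{1}_d)^\dagger(\cdot)$, so the right-hand side becomes $\frac1d\mathrm{col}(\mathds{1}_d)\big(\mathrm{col}(\mathds{1}_d)^\dagger\mathpzc{J}(\Phi)\,\mathrm{col}(\mathds{1}_d)\big)=\frac{F_e(\Phi)}{d}\,\mathrm{col}(\mathds{1}_d)$, recognizing the scalar as $d$ times $F_e(\Phi)$ via \cref{ProcessFidelity} — I would be careful with the factor of $d$ hidden in $\mathpzc{J}$. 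Setting $\lambda=F_e(\Phi)$, which lies in $(0,1]$ by the hypothesis of nonzero process fidelity and the fact that $F_e\le 1$ for CPTP maps, gives $\mathpzc{J}(\Phi_\mu)\mathrm{col}(\mathds{1}_d)=\lambda\,\mathrm{col}(\mathds{1}_d)$, so $\Phi_\mu$ is stochastic with the advertised probability.

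The only real subtlety — and the step I would be most careful about — is bookkeeping with normalization constants and the transpose: keeping straight the $1/d$ in the Choi-state definition \cref{CJstate}, the $1/d$ in the Schur averaging formula \cref{schurNat}, and verifying $(u\otimes\bar u)\mathrm{col}(\mathds{1}_d)=\mathrm{col}(\mathds{1}_d)$ from the vectorization identity \cref{vecId} (namely $\mathrm{col}(ABC)=(C^{\text{T}}\otimes A)\mathrm{col}(B)$ with $B=\mathds{1}_d$, $A=u$, $C^{\text{T}}=\bar u$ i.e. $C=u^\dagger$, giving $\mathrm{col}(u u^\dagger)=\mathrm{col}(\mathds{1}_d)$). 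Once these constants are tracked correctly the identification $\lambda=F_e(\Phi)$ falls out cleanly, and positivity of $\lambda$ is exactly the nonzero-process-fidelity hypothesis. No convexity or operator-norm estimates beyond what \cref{lem:twirl} and Schur's lemma already provide are needed, so I expect this to be a short proof with the index/normalization check being the main place to avoid slips.
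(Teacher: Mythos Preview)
Your proposal is correct and follows essentially the same route as the paper: apply \cref{lem:twirl}, use the invariance of $\mathrm{col}(\mathds{1}_d)$ under $(u\otimes\bar u)^\dagger$, invoke \cref{schurNat}, and identify the resulting scalar as $F_e(\Phi)$ via \cref{ProcessFidelity}. Your added remark that $\Phi_\mu$ is CPTP is a helpful completeness check the paper leaves implicit, and your caution about the $1/d$ bookkeeping and the tensor order in \cref{vecId} (note it is $C^{\mathrm T}\otimes A$, so your parenthetical assignment of $A$ and $C$ is swapped, though the conclusion $\mathrm{col}(\mathds{1}_d)$ is unaffected) is well placed.
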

\begin{proof} We shall prove that $\text{col}(\mathds{1}_{d})$ is an eigenvector of $\mathpzc{J}(\Phi_{\mu})$ with a strictly positive eigenvalue, hence the identifications in \cref{sec:preliminaries} yield the result. We calculate
\begin{align}
&\hspace{0.15cm}\mathpzc{J}(\Phi_{\mu})\text{col}(\mathds{1}_{d})\notag\\
&=\int_{\mathsf{U}(\mathbb{C}^{d})}d\mu(u) u \otimes \bar{u} \mathpzc{J}(\Phi) u^\dagger \otimes u^{\text{T}} \text{col}(\mathds{1}_{d}) \tag{by \cref{lem:twirl}}\\
&=\int_{\mathsf{U}(\mathbb{C}^{d})}d\mu(u) u \otimes \bar{u} \mathpzc{J}(\Phi)\text{col}(\mathds{1}_{d}) \tag{by \cref{vecTraceID}}\\
&=\frac{1}{d} \mathrm{col}(\mathds{1}_{d})\mathrm{col}(\mathds{1}_{d})^{\dagger}\mathpzc{J}(\Phi)\text{col}(\mathds{1}_{d}) \tag{by \cref{schurNat}}\\[0.2cm]
&=F_{e}(\Phi)\text{col}(\mathds{1}_{d})\notag\text{,}
\end{align}
where the final equality follows from \cref{ProcessFidelity}.
\end{proof}

The example defined via Eqs.~\eqref{Ae1}-\eqref{examplePhi} is not a unital stochastic channel. In light of the foregoing example and the following theorem, we conclude that some stochastic channels cannot be accessed via the twirl of any unitary 1-design. 
\begin{theorem}\label{thm: unitalTwirl}
\textit{Let} $\Phi\in\mathfrak{B}(\mathfrak{B}(\mathbb{C}^{d}))$ \textit{be quantum channel. Let} $\mu$ \textit{be a unitary $1$-design. Then} $\Phi_{\mu}$ \textit{is unital.}
\end{theorem}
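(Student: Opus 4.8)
The plan is to show that $\Phi_\mu(\mathds{1}_d) = \mathds{1}_d$ by working directly with the definition of the twirl in \cref{twirlDef}. First I would write
\begin{align}
\Phi_{\mu}(\mathds{1}_{d})=\int_{\mathsf{U}(\mathbb{C}^d)}d\mu(u)\,u\,\Phi\!\left(u^{\dagger}\mathds{1}_{d}u\right)u^{\dagger}=\int_{\mathsf{U}(\mathbb{C}^d)}d\mu(u)\,u\,\Phi(\mathds{1}_{d})\,u^{\dagger},\notag
\end{align}
using only that $u^\dagger \mathds{1}_d u = \mathds{1}_d$ for any unitary $u$. The integrand now depends on $u$ only through conjugation of the fixed operator $\Phi(\mathds{1}_d)$, so I can apply \cref{schur} (the $1$-design twirl identity, a consequence of Schur's lemma) with $X = \Phi(\mathds{1}_d)$ to get
\begin{align}
\Phi_{\mu}(\mathds{1}_{d})=\frac{\tr\!\big(\Phi(\mathds{1}_{d})\big)}{d}\,\mathds{1}_{d}.\notag
\end{align}

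The remaining step is to evaluate $\tr\big(\Phi(\mathds{1}_d)\big)$. Since $\Phi$ is a quantum channel it is trace-preserving, so $\tr\big(\Phi(\mathds{1}_d)\big)=\tr(\mathds{1}_d)=d$, and therefore $\Phi_\mu(\mathds{1}_d)=\mathds{1}_d$, which is precisely the statement that $\Phi_\mu$ is unital. I would note that this argument uses only the defining property \cref{1prop} of a $1$-design (via \cref{schur}) together with trace preservation of $\Phi$; in particular it does not require $\Phi$ itself to be unital, nor does it need \cref{lem:twirl}, although one could alternatively route the computation through the Choi picture as in the proof of \cref{thm: twirl}.

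There is essentially no obstacle here: the only thing to be careful about is the order of the two reductions — pulling the unitary conjugation off $\mathds{1}_d$ \emph{before} invoking Schur's lemma — and making sure \cref{schur} is quoted for a general $1$-design measure $\mu$ rather than just the Haar measure, which the excerpt already establishes. The whole proof is two displayed lines.
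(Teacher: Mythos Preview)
Your proof is correct and follows essentially the same route as the paper: expand $\Phi_\mu(\mathds{1}_d)$ via \cref{twirlDef}, use $\mathrm{Ad}_{u^\dagger}(\mathds{1}_d)=\mathds{1}_d$, apply the $1$-design identity \cref{schur} to $X=\Phi(\mathds{1}_d)$, and finish with trace preservation. The only cosmetic difference is that the paper records the steps in $\mathrm{Ad}$ notation rather than writing out $u\,(\cdot)\,u^\dagger$ explicitly.
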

\begin{proof}
We calculate
\begin{align}
\Phi_{\mu}(\mathds{1}_{d})&=\int_{\mathsf{U}(\mathbb{C}^{d})}d\mu(u)\text{Ad}_{u}\circ\Phi\circ\text{Ad}_{u^{\dagger}}(\mathds{1}_{d})\label{explicitStep1}\tag{by \cref{twirlDef}}\\
&=\int_{\mathsf{U}(\mathbb{C}^{d})}d\mu(u)\text{Ad}_{u}(\Phi(\mathds{1}_{d}))\label{explicitStep2}\tag{by \cref{AdDef}}\\
&=\frac{\tr \Phi(\mathds{1}_d)}{d}\mathds{1}_{d}\label{explicitStep4}\tag{by \cref{schur}}\\[0.3cm]
&=\mathds{1}_{d}\label{explicitStep5}\notag\text{,}
\end{align}
where the final step follows as $\Phi$ is trace preserving.
\end{proof}

\section{Conclusion}\label{sec: conclusion}
We have considered properties of the class of stochastic channels. We proved that the diamond distance of a stochastic channel is its process infidelity (\cref{thm:diamond}). We illuminated the relationship between stochastic channels and unitary 1-designs. In general, we pointed out that channel twirling commutes with the Choi-Jamio{\l}kowski isomorphism (\cref{lem:twirl}). We then demonstrated a way to access stochastic channels. The twirl of a quantum channel over a unitary 1-design is a stochastic channel (\cref{thm: twirl}). Next, we proved that twirling over unitary 1-designs results in unital channels (\cref{thm: unitalTwirl}.) We provided an example of a stochastic channel that is not unital. In \cref{app: qubits}, we prove that all qubit stochastic channels are unital (\cref{thm: qubits}). 

There are some potential avenues for future research into designing stochastic channels. One open question is whether or not a quantum channel $\Phi$ exists outside the class of stochastic channels but with $r_{\diamond}(\Phi)=r(\Phi)$. Another open question is whether or not every qutrit stochastic channel is a mixed unitary channel. Future work might also explore effects obtained via twirling over different unitary 1-designs. We observe from our proof of \cref{thm: twirl} that twirling a quantum channel over any unitary 1-design results in a  stochastic channel whose canonical Kraus operators can depend on the unitary 1-design in question. For instance, consider a group unitary operator basis $\mathsf{B}$ and then its conjugation via any element of the unitary group, $\text{Ad}_{U}(\mathsf{B})$. The twirl with respect to the latter is a rotated version of the former.
The difference between the two twirled channels is subtle and can be missed, e.g., in \cite{Wallman2016}, however, since both channels have the same diamond distance and process fidelity, the difference is typically unimportant.

Finally, the set of stochastic channels is convex. 
A natural avenue for future research is to determine the shape of this convex set.

\appendix

\section{Stochastic Qubit Channels}\label{app: qubits}
In this appendix we prove that all stochastic qubit channels are unital. 
\begin{theorem}\label{thm: qubits}
Let $\Xi\in\mathfrak{B}(\mathfrak{B}(\mathbb{C}^{2}))$ be a stochastic quantum channel. Then $\Xi$ is unital.
\end{theorem}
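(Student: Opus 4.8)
The plan is to exploit the very rigid structure of a stochastic channel in $d=2$: such a $\Xi$ has canonical Kraus operators $B_1 = \sqrt{\lambda}\,\mathds{1}_2$ together with operators $B_2, B_3, B_4 \in \mathfrak{B}(\mathbb{C}^2)$ that are Hilbert-Schmidt orthogonal to the identity, i.e.\ traceless, and these satisfy the trace-preservation condition $\sum_{k=2}^{4} B_k^\dagger B_k = (1-\lambda)\mathds{1}_2$. Unitality is the statement $\sum_{k=2}^{4} B_k B_k^\dagger = (1-\lambda)\mathds{1}_2$ as well, so it suffices to show that for a collection of traceless $2\times 2$ matrices, $\sum_k B_k^\dagger B_k$ being proportional to the identity forces $\sum_k B_k B_k^\dagger$ to be the same multiple of the identity.

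First I would write each traceless $B_k$ in the basis of Pauli matrices, $B_k = \vec{a}_k \cdot \vec{\sigma}$ with $\vec{a}_k \in \mathbb{C}^3$ (no $\mathds{1}_2$ component, since $\tr B_k = 0$). Then I would compute $B_k^\dagger B_k$ and $B_k B_k^\dagger$ explicitly using the identity $(\vec{u}\cdot\vec{\sigma})(\vec{v}\cdot\vec{\sigma}) = (\vec{u}\cdot\vec{v})\mathds{1}_2 + i(\vec{u}\times\vec{v})\cdot\vec{\sigma}$. Writing $\vec{a}_k = \vec{x}_k + i\vec{y}_k$ with real vectors $\vec{x}_k, \vec{y}_k$, a short computation gives $B_k^\dagger B_k = (|\vec{x}_k|^2 + |\vec{y}_k|^2)\mathds{1}_2 + 2(\vec{x}_k \times \vec{y}_k)\cdot\vec\sigma + \text{(terms linear in }\vec\sigma\text{)}$, and $B_k B_k^\dagger$ has the same scalar part but the sign of the $(\vec{x}_k\times\vec{y}_k)$ term and the linear-in-$\vec\sigma$ real part flipped in the appropriate way; the key point is that $B_k^\dagger B_k - B_k B_k^\dagger = c\bigl[(\vec x_k \cdot \vec\sigma),(\vec y_k\cdot\vec\sigma)\bigr]$ up to constants, which is traceless and, crucially, Hermitian-antisymmetric in a way that the sum over $k$ must cancel once we know $\sum_k B_k^\dagger B_k \propto \mathds{1}_2$. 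Concretely, I would show that the condition $\sum_k B_k^\dagger B_k = (1-\lambda)\mathds{1}_2$ pins down $\sum_k (\vec x_k \times \vec y_k) = 0$ and that the remaining $\vec\sigma$-components also vanish in the sum, which are precisely the same constraints needed to make $\sum_k B_k B_k^\dagger = (1-\lambda)\mathds{1}_2$.

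An alternative and perhaps cleaner route: use the fact that $\Xi$ is unital iff $\Xi^\dagger$ (the adjoint channel, whose Kraus operators are $B_k^\dagger$) is trace-preserving; and note that $\Xi^\dagger$ is also stochastic with the same $\lambda$ since $B_1^\dagger = \sqrt{\lambda}\mathds{1}_2$. So it would be enough to relate the spectra or the Choi states of $\Xi$ and $\Xi^\dagger$ in dimension $2$. Yet another angle is to observe that for $d=2$, positivity of the Choi state together with the eigenvalue equation $\mathpzc{J}(\Xi)\,\text{col}(\mathds{1}_2) = \lambda\,\text{col}(\mathds{1}_2)$ constrains $\mathpzc{J}(\Xi)$ strongly because $\text{col}(\mathds{1}_2)$ spans a one-dimensional subspace and its orthogonal complement is only three-dimensional; one can parametrize the block structure of $\mathpzc{J}(\Xi)$ relative to the splitting $\mathbb{C}^4 = \mathbb{C}\,\text{col}(\mathds{1}_2) \oplus (\text{col}(\mathds{1}_2))^\perp$ and read off unitality.

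I expect the main obstacle to be bookkeeping rather than conceptual: carefully tracking the complex (non-unitary) parts of the $B_k$ and verifying that the off-diagonal cross terms in $\sum_k B_k^\dagger B_k$ versus $\sum_k B_k B_k^\dagger$ differ only by signs that are forced to cancel. The reason the argument works in $d=2$ and not obviously in higher dimensions is that traceless $2\times 2$ matrices form a $3$-complex-dimensional space with a very symmetric commutator structure (the Pauli algebra), so $B^\dagger B$ and $BB^\dagger$ are tightly linked; the explicit counterexample in the main text shows this linkage genuinely fails for $d=4$. I would therefore present the Pauli-basis computation as the core of the proof and flag precisely where the dimension-$2$ special structure is used.
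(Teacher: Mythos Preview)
Your primary Pauli-basis approach is correct and in fact sharper than you make it sound: for any traceless $B\in\mathfrak{B}(\mathbb{C}^2)$ written as $B=\vec a\cdot\vec\sigma$ with $\vec a=\vec x+i\vec y$, the identity $(\vec u\cdot\vec\sigma)(\vec v\cdot\vec\sigma)=(\vec u\cdot\vec v)\mathds{1}_2+i(\vec u\times\vec v)\cdot\vec\sigma$ gives exactly
\[
B^\dagger B=(|\vec x|^2+|\vec y|^2)\mathds{1}_2-2(\vec x\times\vec y)\cdot\vec\sigma,\qquad BB^\dagger=(|\vec x|^2+|\vec y|^2)\mathds{1}_2+2(\vec x\times\vec y)\cdot\vec\sigma,
\]
with no further ``linear in $\vec\sigma$'' terms; equivalently, $B^\dagger B+BB^\dagger=\tr(B^\dagger B)\,\mathds{1}_2$ for every traceless $2\times2$ matrix. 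Summing over $k\ge2$ then shows immediately that $\sum_k B_k^\dagger B_k=(1-\lambda)\mathds{1}_2$ forces $\sum_k B_kB_k^\dagger=(1-\lambda)\mathds{1}_2$, so $\Xi$ is unital. This is a genuinely different route from the paper's proof, which instead invokes the King--Ruskai canonical form $\Xi=\mathrm{Ad}_u\circ\Phi\circ\mathrm{Ad}_v$ with $u,v\in\mathsf{SU}(2)$, writes $\mathpzc{J}(\Phi)$ in the six-parameter $(\eta_x,\eta_y,\eta_z,\kappa_x,\kappa_y,\kappa_z)$ Bloch-affine form, and then solves the eigenvector condition $\mathpzc{J}(\Phi)\,\mathrm{col}((vu)^{\mathrm T})=\lambda\,\mathrm{col}(\overline{vu})$ as a small linear system to conclude $\kappa_x=\kappa_y=\kappa_z=0$. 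Your argument is more elementary and self-contained (no external normal-form lemma), and it isolates exactly the $d=2$ miracle---that $B^\dagger B$ and $BB^\dagger$ for traceless $B$ differ only by a sign on the traceless part---whereas the paper's version is more geometric and makes the Bloch-ball interpretation of unitality explicit. Your third sketched alternative (block-decomposing $\mathpzc{J}(\Xi)$ relative to $\mathrm{col}(\mathds{1}_2)\oplus(\mathrm{col}(\mathds{1}_2))^\perp$) is closest in spirit to what the paper does.
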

\begin{proof}
Any completely positive and trace preserving map $\Xi\in\mathfrak{B}(\mathfrak{B}(\mathbb{C}^{2}))$  can be decomposed as $\Xi=\text{Ad}_{u}\circ\Phi\circ\text{Ad}_{v}$ with $u, v\in\mathsf{SU}(2)$ and where $\mathpzc{J}(\Phi)$ is the following matrix 
\begin{equation}\label{BZchoi}\hspace{-0.15cm} \begin{small}{\begin{pmatrix}
    1 + \eta_{z}+\kappa_{z} & 0 & \kappa_{x}+i\kappa_{y} & \eta_{x}+\eta_{y}\\
    0 & 1 -\eta_{z}+\kappa_{z} & \eta_{x}-\eta_{y} &\kappa_{x}+i\kappa_{y}\\
    \kappa_{x}-i\kappa_{y} &\eta_{x}-\eta_{y} & 1 -\eta_{z}-\kappa_{z}&0\\
    \eta_{x}+\eta_{y} & \kappa_{x}-i\kappa_{y} & 0 & 1 + \eta_{z}-\kappa_{z}
    \end{pmatrix}}\end{small}
\end{equation}
with parameters $\eta_{x},\eta_{y},\eta_{z}$ characterizing the distortion of the Bloch ball, and $\kappa_{x},\kappa_{y},\kappa_{z}$ defining the Bloch sphere coordinates of the action of $\Phi$ on $\mathds{1}_{2}/2$ \cite{Bengtsson2017}. In particular, $\kappa_{x}=\kappa_{y}=\kappa_{z}=0$ if and only if $\Phi$ is unital. There are constraints on the foregoing parameters in order to ensure completely positivity and trace preservation. We shall, however, need only the facts $\kappa_{x},\kappa_{y},\kappa_{z}\in\mathbb{R}$. 
By \cref{eq:compositeUV}, the Choi state of $\Xi$ is
\begin{align}
\mathpzc{J}(\Xi) = (u \otimes v^{\text{T}}) \mathpzc{J}(\Phi) (v \otimes u^{\text{T}}) \text{.}    
\end{align}
If $\Xi$ is stochastic, then there exists some $\lambda > 0$ such that
\begin{align}
    \mathpzc{J}(\Xi)\text{col}(\mathds{1}_{2})=\text{col}(\mathds{1}_{2})\lambda,
\end{align}
or, multiplying both sides by $(u \otimes v^{\text{T}})^\dagger$ and using \cref{vecId} gives
\begin{align}
    \mathpzc{J}(\Phi)\text{col}((vu)^{\text{T}}) = \lambda\text{col}(\overline{vu}).
\end{align}
$\mathsf{SU}(2)$ is a group, so we can write $(vu)^{\text{T}}$ as
\begin{equation}
        (vu)^{\text{T}} =\begin{small}{\begin{pmatrix}
        \hspace{0.23cm}\gamma_{1} &\gamma_{2}\\-\overline{\gamma_{2}} &\overline{\gamma_{1}}
        \end{pmatrix}}\end{small}\text{,}
\end{equation}
with $\gamma_{1},\gamma_{2}\in\mathbb{C}$ such that $|\gamma_{1}|^{2}+|\gamma_{2}|^{2}=1$.
We then have
\begin{align}\label{eq:qubitCase}
    \mathpzc{J}(\Phi) \begin{pmatrix} \gamma_1  \\ -\overline{\gamma_2} \\ \gamma_2 \\ \overline{\gamma_1} \end{pmatrix} = \lambda \begin{pmatrix} \overline{\gamma_1} \\ \overline{\gamma_2} \\ -\gamma_2 \\ \gamma_1 \end{pmatrix}.
\end{align}
Subtracting the conjugate of the fourth equation and using the fact that the parameters of $\mathpzc{J}(\Phi)$ are real gives
\begin{align}\label{eq:Gamma1}
    -\kappa_z \gamma_1 = (\kappa_x + i \kappa_y) \gamma_2.
\end{align}
Similarly, adding the conjugate of the second equation to the third equation and using the fact that the parameters of $\mathpzc{J}(\Phi)$ are real gives
\begin{align}\label{eq:Gamma2}
    \kappa_z \gamma_2 =  (\kappa_x - i \kappa_y)
    \gamma_1.
\end{align}
If $\gamma_{1}=0$ then \cref{eq:Gamma1} yields $\kappa_{x}=-i\kappa_{y}$, whence by the reality of $\kappa_{x}$ and $\kappa_{y}$ we have $\kappa_x = \kappa_y = 0$ and then by \cref{eq:Gamma2} we conclude that $\kappa_{z}=0$. 
If instead $\gamma_{2}=0$ then we reach the same conclusion. If both $\gamma_{1}$ and $\gamma_{2}$ are nonzero then we multiply \cref{eq:Gamma1} and \cref{eq:Gamma2} to find
\begin{align}
    -\kappa_z^2 \gamma_1 \gamma_2 &= (\kappa_x - i \kappa_y)
    \gamma_1(\kappa_x + i \kappa_y) \gamma_2 \notag\\
    &= (\kappa_x^2 + \kappa_y^2) \gamma_1 \gamma_2,
\end{align}
and so we must have $\kappa_x = \kappa_y = \kappa_z = 0$ and we conclude again that $\Xi$ is unital.
\end{proof}
\begin{acknowledgments}
We would like to thank Arnaud Carignan-Dugas and Raymond Laflamme for some illuminating discussions. MG acknowledges support from NSERC DG2018-03968. This research was undertaken thanks in part to funding from the Canada First Research Excellence Fund, the Government of Ontario, and the Government of Canada through NSERC.
\end{acknowledgments}

\bibliography{GSNW}

\end{document}